\definecolor{webgreen}{rgb}{0,.5,0}
\definecolor{webbrown}{rgb}{.6,0,0}
\def\derives{\Longrightarrow}
\def\derivestar{\Longrightarrow^*}
\def\union{\, \cup \,}
\def\intersect{\, \cap \,}
\DeclareMathOperator{\lexleast}{lexlt}
\DeclareMathOperator{\cyc}{cyc}
\DeclareMathOperator{\quo}{quo}
\def\Que{{\mathbb{Q}}}
\def\Enn{{\mathbb{N}}}
\def\atopp#1#2{{#1 \atopwithdelims[] #2}}
\title{Undecidability and Finite Automata}
\authorrunning{J. Endrullis, J. Shallit, and Tim Smith}
\author{J\"org Endrullis\inst{1} \and 
Jeffrey Shallit\inst{2} \and Tim Smith\inst{2}} 
\institute{
Vrije Universiteit Amsterdam,
Department of Computer Science,
De Boelelaan 1081a,
1081 HV Amsterdam,
The Netherlands \\
\email{joerg@endrullis.de}
\and
School of Computer Science,
University of Waterloo,
Waterloo, ON  N2L 3G1,
Canada\\
\email{shallit@uwaterloo.ca},
\email{timsmith@uwaterloo.ca}
}
\begin{document}

\maketitle

\begin{abstract}
Using a novel rewriting problem,
we show that several natural decision problems about finite automata
are undecidable (i.e., recursively unsolvable).
In contrast, we also prove three related problems are decidable.
We apply one result to
prove the undecidability of
a related problem about $k$-automatic sets of rational numbers.
\end{abstract}

\section{Introduction}
\label{one}

Starting with the first 
result of Turing \cite{Turing:1936}, computer scientists 
have assembled a large collection of natural
decision problems that
are undecidable (i.e., recursively unsolvable); see, for
example, the book of
Rozenberg and Salomaa \cite{Rozenberg&Salomaa:1994}.

Although some of these results deal with relatively weak computing
models, such as pushdown automata 
\cite{Bar-Hillel&Perles&Shamir:1961,Ginsburg&Rose:1963},
few, if any, are concerned
with the very simplest model:  the finite automaton.  One exception
is the following decision problem, due to Engelfriet and
Rozenberg \cite[Theorem 15]{Engelfriet&Rozenberg:1980}:
given a finite automaton $M$ with an input alphabet of
both primed and unprimed letters (i.e., an alphabet
$\Sigma \ \cup \ \Sigma'$, where $\Sigma' = \{ a' \ : \ a \in \Sigma \}$),
decide if $M$ accepts a word $w$ where the primed letters, after the
primes have been removed, form a word
identical to that formed by the unprimed letters.  This
problem was also mentioned by Hoogeboom \cite{Jan:2012}.
This problem is
easily seen to be undecidable, as it is a disguised version of the
classical Post correspondence problem \cite{Post:1965}.

In this paper we start by proving a novel lemma on rewriting systems.
In Sect.~\ref{three},
this lemma is then applied to give a new example of a natural
problem on finite automata that is undecidable.  In Sect.~\ref{four}
we prove that a related problem on the so-called $k$-automatic
sets of rational numbers is also undecidable.  In Sect.~\ref{five}
we prove the undecidability of yet another problem about finite automata.
Finally,
in Sect.~\ref{five} we show that it is decidable if a finite automaton
accepts two distinct conjugates.

\section{A lemma on rewriting systems}
\label{two}

For our purposes, a {\it rewriting system} $S$ over an alphabet
$\Sigma$ consists of 
a finite set of context-free
rules of the form $\ell \rightarrow r$, where $\ell, r \in
\Sigma^*$.  Such a rewriting rule applies to the word
$\alpha \ell \beta \in \Sigma^*$, and converts it to
$\alpha r \beta$.  We indicate this by writing
$\alpha \ell \beta \derives \alpha r \beta$.
We use
$\derivestar$ for the reflexive, transitive closure
of $\derives$ (so that $\gamma \derivestar \zeta$ means that
there is a sequence of $0$ or more rules taking the word
$\gamma$ to $\zeta$).
A rewriting system is said to be
{\it length-preserving} if $|\ell| = |r|$ for all rewriting
rules $\ell \rightarrow r$.

Many undecidable decision problems related to rewriting systems are known
\cite{Book&Otto:1993}.
However, to the best of our knowledge, the following one is
new.

\bigskip

\noindent\texttt{REWRITE-POWER}

\medskip

\noindent{\it Instance:} an alphabet $\Sigma$ containing the symbols $a$ and $b$ (and possibly other symbols), and a length-preserving rewriting system $S$.

\medskip

\noindent{\it Question:} does there exist an integer
$n \geq 1$ such that $a^n \Longrightarrow^* b^n$?

\begin{lemma} 
The decision problem \texttt{\textup{REWRITE-POWER}} is undecidable.
\end{lemma}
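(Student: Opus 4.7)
The plan is to reduce from a standard undecidable problem whose computational power matches that of length-preserving rewriting, namely the emptiness problem for linear bounded automata (LBAs). Given an LBA $M$, I would construct a length-preserving rewriting system $S_M$ over an alphabet $\Sigma \supseteq \{a, b\}$ with the property that $a^n \derivestar b^n$ for some $n \geq 1$ if and only if $M$ accepts some input. Since LBA emptiness is known to be undecidable, this yields undecidability of the problem.

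The construction would proceed in three stages. In an \emph{initialization} stage, nondeterministic rules such as $a \to L$, $a \to q_0$, $a \to \sigma$ (for each tape symbol $\sigma \in \Gamma$), and $a \to R$ convert $a^n$ into a candidate initial configuration $L\, q_0\, w\, R$ on a tape of length $n - 3$. In a \emph{simulation} stage, standard Turing-machine-style length-preserving rules (for example $q\sigma \to \tau q'$ and $\sigma q \to q' \tau$) encode the transitions of $M$. In a \emph{finalization} stage, once the accept state $q_f$ appears, rules propagate $b$'s outward from $q_f$, overwriting every other symbol, until the entire string equals $b^n$. Thus if $M$ accepts some input $w$ of length $\ell$, then taking $n = \ell + 3$ gives $a^n \derivestar b^n$.

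The main obstacle is the converse direction: showing that every derivation $a^n \derivestar b^n$ embeds a genuine accepting computation of $M$. The initialization rules can fire at arbitrary positions, so malformed ``configurations'' (two start-state symbols, misplaced end-markers, tape symbols intermixed with residual $a$'s, and so on) may arise, and one must ensure that none of them leads to $b^n$ via some unintended combination of rules. The remedy is to gate the simulation and finalization rules on local well-formedness: for instance, require the $b$-propagation to originate at $q_f$ adjacent to a genuine tape symbol, have the simulation rules preserve the single-state invariant, and arrange that $q_f$ arises only from a legitimate transition applied to a well-formed configuration. A careful case analysis then confirms that only bona fide accepting computations of $M$ yield $b^n$. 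A parallel and perhaps more elementary route would reduce from the Post Correspondence Problem, encoding a candidate solution inside the intermediate string and using length-preserving rewrites to verify that the two induced strings coincide; the technical overhead would be of the same kind, as verifying equality of two tracks is itself a small LBA computation.
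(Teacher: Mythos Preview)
Your reduction from LBA emptiness is a reasonable starting point, and you have correctly identified the central difficulty: because the starting word $a^n$ has no distinguished position, the single-letter initialization rules $a\to L$, $a\to q_0$, $a\to\sigma$, $a\to R$ can fire anywhere and in any multiplicity, producing words with several heads, several end-markers, or markers in the wrong order. The gap is the claim that this can be repaired by ``gating the simulation and finalization rules on local well-formedness.'' Length-preserving rules see only a bounded window; they cannot enforce global invariants such as ``exactly one state symbol is present'' or ``$L$ occurs only at the left end and $R$ only at the right end.'' Your simulation rules do preserve the head count, but nothing in your scheme \emph{establishes} it, and nothing prevents a stray $a\to q_0$ from inserting a second head mid-run, or a stray $a\to R$ from planting a right end-marker in the middle of the tape so that a non-accepting LBA appears to accept a shorter input. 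The appeal to a ``careful case analysis'' does not discharge this; a concrete mechanism is needed, and you have not supplied one.

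The paper resolves the difficulty by a different device: rather than trying to prevent multiple simulations, it \emph{permits} them, separated by a delimiter~$\$$, and it reduces from halting on the \emph{blank} tape rather than from LBA emptiness. The only rule that creates a head is $aa\to\$q_0$, and the only rule that removes a~$\$$ is $\$q_f\to bb$; no rule lets a head cross a~$\$$, an~$a$, or a~$b$. Hence the word decomposes into $\$$-delimited segments, each running an independent copy of the \emph{same} computation (the machine on blank input), and for the whole word to become $b^n$ every segment must reach~$q_f$. Because all segments simulate the same input, this happens iff the machine halts on blank tape. Your LBA reduction can in fact be salvaged with exactly this delimiter trick---each segment would simulate $M$ on some nondeterministically chosen input~$w_i$, and if all segments finish then in particular $L(M)\neq\emptyset$---but that is precisely the structural idea your proposal lacks.
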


\begin{proof}
The standard approach for showing that a rewriting problem is
undecidable is to reduce from the halting problem, by encoding a Turing
machine $M$ and simulating its computation using the rewrite rules;
for example, see \cite{Book&Otto:1993}.

The difficulty with applying that approach in the present case is the
lack of asymmetry, i.e., the fact that the initial word consists of
all $a$'s.  Because there is no distinguished symbol with which to start
the simulation, unwanted parallel simulations of $M$ could occur at
different parts of the word.

To deal with this difficulty, we construct a rewriting system that
permits multiple simulations of $M$ to arise, but employs a delimiter 
symbol \$ to ensure that they do not interfere with each other.  Each
simulation works on its own portion of the word, and changes it to
$b$'s (ending by changing the delimiter symbol as well) only if $M$
halts.

Here are the details.  We use
the one-tape model of Turing machine from 
Hopcroft and Ullman \cite{Hopcroft&Ullman:1979},
where $M = (Q, \Omega, \Gamma, \delta, q_0, {\tt B}, q_f)$.
Here $Q$ is the set of states of $M$, with $q_0 \in Q$ 
the start state and $q_f \in Q$ the unique final state.
Let $\Omega$ be the input alphabet and
$\Gamma$ the tape alphabet,
with ${\tt B} \in \Gamma$ being the distinguished blank symbol.
Let $\delta$ be the (partial) 
transition function, with domain $Q \times \Gamma$
and range $Q \times \Gamma \times \{L, R\}$.
We assume without loss of generality that $M$ halts, i.e.,
$M$ has no next move, iff it is in state $q_f$.  We also assume that
$a, b, \$ \not\in \Gamma$.

We construct our length-preserving rewriting system mimicking the computations
of $M$ as follows.
Let $\Sigma = \Gamma \cup Q \cup \{a,b,\$\}$.  Let $S$ contain the following rewrite rules: 

\newpage

\begin{align}
	aa &\rightarrow \$ q_0 \\
	a &\rightarrow {\tt B} \\
	q_i c &\rightarrow d q_j \quad \text{ if } (q_j, d, R) \in \delta(q_i, c) \\
	f q_i c &\rightarrow q_j f d	\quad\text{ if } (q_j, d, L)] \in \delta(q_i, c) \text{ and } f \in \Gamma \\
	q_f c &\rightarrow c q_f \quad \text{ for all } c \in \Gamma \\
	c q_f &\rightarrow q_f b \quad \text{ for all } c \in \Gamma \\
	\$ q_f &\rightarrow bb
\end{align}
Rule (1) starts a new simulation.  Each simulation has its own head
symbol $q_i$ and left end-marker \$, and never moves the head
symbol past an $a$, $b$, or \$.  This ensures that the simulations are
kept separate from each other.  Rule (2) converts an $a$ to a blank
symbol, available for use in a simulation.  Rules (3) and (4) are used
to simulate the transitions of $M$.  Once a simulation reaches $q_f$
(meaning that $M$ has halted), its head can be moved to the right using
rule (5), past all of the symbols it has read, and then back to the
left using rule (6), changing all of those symbols to $b$'s.  Finally,
the simulation can be stopped using rule (7).

We argue that $M$ halts when run on a blank tape iff $a^n
\Longrightarrow^* b^n$ for some $n \geq 1$.

If $M$ halts when run on a blank tape, then there is some number of
tape squares $k$ which it uses.  Let $n = k+2$.  Then $S$ can use rule
(1) to start a simulation with the two $a$'s at the left end of the
word, use rule (2) to convert the $k$ remaining $a$'s to blank
squares, and run the simulation using rules (3) and (4).  Eventually
$M$ halts in the final state $q_f$.  At that point, $S$ can move
the head to the right end of the word using rule (5), convert all $k$
tape squares to $b$'s using rule (6), and then convert the end-marker
\$ and tape head to $b$'s with rule (7).  Thus we have $a^n
\Longrightarrow^* b^n$.

For the other direction, if the initial word $a^n$ is ever
transformed into $b^n$, it means that one or more simulations were run,
each of which operated on a portion of the word without interference
from the others.  
The absence of interference can be deduced from the shape of the rules.
There is only one occurrence of the marker \$ in the left-hand sides of
the rules, namely as the leftmost symbol in the left-hand side of rule
(7).  Furthermore, \$ can only be rewritten to $b$ which does not occur
in any left-hand side.

Each simulation runs $M$ on a blank tape, and uses a
number of tape squares bounded by the length of its portion of the
word.  Since every portion of the word was transformed into $b$'s,
$M$ halted in every one of the simulations (otherwise the word would
still contain one or more head symbols and end-markers).  The
completion of any one of these simulations is enough to show that $M$
halts when run on a blank tape.

Therefore $M$ halts when run on a blank tape
iff there exists an $n \geq 1$ such that $a^n
\Longrightarrow^* b^n$, completing the reduction from the halting
problem.  Since the halting problem is undecidable,
\texttt{REWRITE-POWER} is also undecidable.  \qed
\end{proof}

\section{An undecidable problem on finite automata}
\label{three}

Our model of finite automaton is the usual one (e.g., 
\cite{Hopcroft&Ullman:1979}).
We now consider a decision problem on finite automata.
To state it, we need the notion of the product of two words
of the same length.  Let $\Sigma, \Delta$ be alphabets, and let
$w \in \Sigma^*$, $x \in \Delta^*$, with $|w| = |x|$.  Then
by $w \times x$ we mean the word 
over the alphabet $\Sigma \times \Delta$
whose projection $\pi_1$
over the first coordinate is $w$ and whose projection $\pi_2$
over the second coordinate is $x$.  More precisely, if 
$w = a_1 a_2 \cdots a_n$ and
$x = b_1 b_2 \cdots b_n$, then
$$w \times x = [a_1, b_1] [a_2, b_2] \cdots [a_n, b_n]. $$  
In this case $\pi_1 (w\times x) = w$ and
$\pi_2(w\times x) = x$.  For example,
if 
$$y = [{\tt t,h}] \, [{\tt e,o}] \, [{\tt r,e}] \, [{\tt m,s}],$$
then $\pi_1(y) = {\tt term}$ and
$\pi_1 (y) = {\tt hoes}$.
To simplify notation, we often write
$\atopp{w}{x}$
in place of $w \times x$.  For example, 
$$ \left[ \begin{array}{cc}
{\tt cat} \\
{\tt dog} \end{array} \right] \quad \text{ means the same thing as }
	\quad \atopp{\tt c}{\tt d} \atopp{\tt a}{\tt o} \atopp{\tt t}{\tt g}
	\quad \text{ and } \quad [{\tt c},{\tt d}] [{\tt a},{\tt o}] 
	[{\tt t}, {\tt g}] .$$

Consider the following decision problem:

\bigskip

\noindent {\tt ACCEPTS-SHIFT}

\medskip

\noindent {\it Instance:}
an alphabet $\Gamma$, a letter $c \not\in \Gamma$, and a finite
automaton $M$ with input alphabet $(\Gamma \union \{c\})^2$.

\medskip

\noindent 
{\it Question:}
does $M$ accept a word of the form $x c^n \times c^n x$
for some $x \in \Gamma^*$ and $n \geq 0$?

\begin{theorem}
The decision problem {\tt ACCEPTS-SHIFT} is undecidable.
\label{asu}
\end{theorem}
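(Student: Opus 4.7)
The plan is to reduce from \texttt{REWRITE-POWER}, which was just shown undecidable. Given a length-preserving rewriting system $S$ over an alphabet $\Sigma$ containing $a$ and $b$, I will construct an NFA $M$ over $(\Gamma \cup \{c\})^2$, where $\Gamma = \Sigma \cup \{d\}$ for a fresh delimiter $d \notin \Sigma \cup \{c\}$, such that $M$ accepts some word of the form $x c^n \times c^n x$ if and only if $a^m \Longrightarrow^* b^m$ for some $m \geq 1$.

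The encoding I have in mind represents a derivation $v_0 \Longrightarrow v_1 \Longrightarrow \cdots \Longrightarrow v_{K-1}$ with $v_0 = a^m$, $v_{K-1} = b^m$, and $|v_i| = m$ for all $i$, by the word $x = (d v_0)(d v_1) \cdots (d v_{K-1})$ together with $n = m+1$. The shift then partitions $x c^n \times c^n x$ into $K+1$ length-$n$ blocks: block $0$ has top $d a^m$ and bottom $c^n$; each middle block $i$ (for $1 \leq i \leq K-1$) has top $d v_i$ aligned with bottom $d v_{i-1}$, so its leading pair is $(d,d)$; and block $K$ has top $c^n$ and bottom $d b^m$, with leading pair $(c,d)$.

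The automaton $M$ will be built in three phases. Phase~1 reads the pair $(d,c)$ followed by at least one $(a,c)$, forcing the top prefix to be $d a^m$ with $m \geq 1$, and leaves only on a $(d,d)$ pair. Phase~2 handles the middle blocks via a ``waiting $\to$ in-rule $\to$ after'' gadget: identity pairs $(x,x)$ with $x \in \Sigma$ are consumed freely in the waiting and after states, and a single rule $\ell \to r$ may be matched per block by reading its pairs $(r[i],\ell[i])$ in order; a $(d,d)$ pair read in the waiting or after state starts a new block, and a $(c,d)$ pair there enters phase~3. Phase~3 reads at least one $(c,b)$ before accepting, forcing the bottom suffix to be $b^m$ with $m \geq 1$. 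By design, in phases~2 and~3 the only transitions involving $d$ are on $(d,d)$ and $(c,d)$.

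The forward direction is then routine: given any derivation $a^m \Longrightarrow^* b^m$ (padded with trivial no-change steps if desired to increase the number of blocks), the corresponding encoded word admits an accepting run of $M$. The main obstacle will be the converse. From an accepting run on some $x c^n \times c^n x$, I must argue that $x$ genuinely has the shape $(d v_0)(d v_1)\cdots(d v_{K-1})$ and that the run witnesses a derivation from $a^{n-1}$ to $b^{n-1}$. The key observation is that a $(d,d)$ pair in the word can occur only at positions $\geq n$ whose top coordinate is a $d$ in $x$; combined with phase~1 already forcing $x$ to start with $d a^{n-1}$ and the fact that phases~2 and~3 permit no other transitions involving $d$, an induction along the blocks pins every remaining $d$ of $x$ to a position of the form $jn$ and fixes the block structure. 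The one-rule-per-block discipline of phase~2 then interprets each middle block as either a no-change or a single rewrite step, yielding the required $a^{m} \Longrightarrow^* b^{m}$ with $m = n-1 \geq 1$.
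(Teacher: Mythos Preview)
Your proposal is correct and follows essentially the same approach as the paper's proof. Both reduce from \texttt{REWRITE-POWER} by introducing a fresh delimiter $d$, encoding a putative derivation $a^{m}=v_0\Rightarrow\cdots\Rightarrow v_{K-1}=b^{m}$ as a word $x$ over $\Sigma\cup\{d\}$, and recognising the pair language via a block structure of the shape ``initial $(d,c)(a,c)^{+}$, middle blocks separated by $(d,d)$ checking one rewrite each, final block of $(c,b)$'s paired with a $(c,d)$.'' The only differences are cosmetic: the paper takes $x=d u_0 d u_1\cdots d u_m d$ with a trailing $d$ (so its language ends $[c,b]^{+}[c,d]$) whereas you omit the trailing $d$ (so your Phase~3 reads $(c,d)(c,b)^{+}$), and the paper insists on exactly one rule application per block ($E^{*}RE^{*}$) while you allow zero-or-one, absorbing identity steps. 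Your converse sketch via block-by-block induction is the same argument the paper gives more compactly by comparing the two $d$-factorisations of $x$; just note that your phrase ``positions of the form $jn$'' should read $jn+1$.
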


\begin{proof}
We reduce from the problem {\tt REWRITE-POWER}.  
An instance of this decision problem is a set $S$ of
length-preserving rewriting rules, an alphabet $\Sigma$, and letters
$a, b \in \Sigma$.
Define $\Gamma = \Sigma \ \cup \ \{ d \}$, where $d \not\in \Sigma$
is a new symbol.
Now we define the following regular languages:
\begin{align*}
E & = \left\{ \atopp{e}{e} \ : \ e \in \Sigma \right\} \\[1em]
R & = \left\{ \atopp{r}{\ell} \ : \ 
	\ell \rightarrow r \in S \right\} \\[1em]
L &= 
\atopp{d}{c}
{\atopp{a}{c}}^+
\atopp{d}{d}
\left( E^* R E^* 
\atopp{d}{d}
\right)^*
{\atopp{c}{b}}^+
\atopp{c}{d} .
\end{align*}
Let $M = (Q, \Delta, \delta, q_0, F)$ be a deterministic finite
automaton accepting $L$, with
$\Delta = (\Gamma \ \cup \ \{c\})^2$.
Clearly $M$ can be constructed effectively
from the definitions.   

We claim that, for all $n \geq 2$, we have
$a^{n-1} \derivestar b^{n-1}$ 
iff 
the language $L = L(M)$ contains a word of the form $x c^n \times c^n x$.
The crucial observation is that
\begin{equation}
u \derives v  \quad \text{iff} \quad
\atopp{v}{u}
\in
E^* R E^*.
\label{cruc}
\end{equation}
This follows immediately from the definitions of $E$ and $R$.

\bigskip

\noindent $\Longrightarrow$: 
Suppose
$$ u_0 := a^{n-1}  \derives u_1 \derives \cdots \derives
u_m = b^{n-1}$$
with $m \geq 1$ and $n \geq 2$.
Then 
$$ \atopp{u_{i+1}}{u_i} \in E^* R E^* $$
for $0 \leq i < m$.
Then
$$ 
\atopp{d}{d}
\atopp{u_1}{u_0} \atopp{d}{d} \atopp{u_2}{u_1} \cdots
\atopp{d}{d} \atopp{u_m}{u_{m-1}} \atopp{d}{d} \in 
\atopp{d}{d}
\left( E^* R E^* 
\atopp{d}{d}
\right)^* . \\[1em]
$$
Hence
$$ \atopp{d}{c} \atopp{u_0}{c^{n-1}} \atopp{d}{d}
\atopp{u_1}{u_0} \atopp{d}{d} \atopp{u_2}{u_1} \cdots
\atopp{d}{d} \atopp{u_m}{u_{m-1}} \atopp{d}{d}
\atopp{c^{n-1}}{u_m} \atopp{c}{d} \in L, \\[1em]$$
as desired.
The first component is
$d u_0 du_1 d \cdots d u_m d c^n$, while the
second component is
$c^n d u_0 d u_1 \cdots d u_{m-1} d u_m d$.
Taking 
$x = d u_0 d u_1 d \cdots d u_m d,$
we see that $x c^n \times c^n x \in L$.

\bigskip

\noindent $\Longleftarrow$:  Assume that $x c^n \times c^n x \in L$ for
some word $x$ with $n \geq 2$.  Now $L$ consists only of words of the form
$$
w = \atopp{d}{c}
{\atopp{a}{c}}^i
\atopp{d}{d}
\atopp{v_0}{u_0} 
\atopp{d}{d}
\atopp{v_1}{u_1}
\atopp{d}{d}
\cdots
\atopp{v_m}{u_m}
\atopp{d}{d}
{\atopp{c}{b}}^j
\atopp{c}{d} \\[1em]
$$
where $u_t \derives v_t$ for $1\leq t\leq m$ and $i, j \geq 1$.
Observe that
\begin{align*}
\pi_1 (w) &= d a^i d v_0 d v_1 \cdots d v_m d c^{j+1} \\
\pi_2 (w) &= c^{i+1} d u_0 d u_1 \cdots d u_m d b^j d,
\end{align*}
so if $\pi_1 (w) = xc^n$ and $\pi_2 (w) = c^n x$ we must have
$i = j = n-1$  and $x = da^i d v_0 d v_1 \cdots d v_m d 
= d u_0 d u_1 \cdots d u_m d b^j d$.
Since $d$ is a new symbol, not in the alphabet of $\Sigma$,
it follows that 
$u_0 = a^i$, $u_1 = v_0$, $u_2 = v_1, \ldots$, $u_m = v_{m-1}$, and
$ b^j = v_m $.

But then $u_0 \derives v_0 = u_1$,
$u_1 \derives v_1 = u_2$, and so forth, up to 
$u_{m-1} \derives v_{m-1} = u_m$, and finally
$u_m \derives v_m = b^j$.  So $u_0 \derivestar v_m$, and therefore
$a^{n-1} \derivestar b^{n-1}$.  This completes the proof. \qed
\end{proof}

\begin{remark}
In the decision problem {\tt ACCEPTS-SHIFT},
the undecidability of the problem arises, in an essential way,
from words of the form
$x c^n \times c^n x$ where $n < |x|$, and not from those words
with $n \geq |x|$ as one might first suspect.  More formally,
the related decision problem defined below is
actually solvable in cubic time.

\bigskip

\noindent {\tt ACCEPTS-LONG-SHIFT}

\medskip

\noindent {\it Instance:}
an alphabet $\Sigma$, a letter $c \not\in \Sigma$, and a finite
automaton $M$ with input alphabet $(\Sigma \union \{c\})^2$.

\medskip

\noindent 
{\it Question:}
does $M$ accept a word of the form $x c^n \times c^n x$
for some $n \geq |x|$?

\begin{theorem}
The decision problem {\tt ACCEPTS-LONG-SHIFT} is solvable
in cubic time.
\end{theorem}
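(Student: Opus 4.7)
The plan is to exploit a canonical factorization of accepted words in the ``long shift'' regime. If $n \geq |x| = k$, the word $xc^n \times c^n x$ splits into three contiguous blocks: a prefix $\atopp{x}{c^k}$ (positions $1,\ldots,k$), a middle block $\atopp{c}{c}^{n-k}$ (positions $k+1,\ldots,n$), and a suffix $\atopp{c^k}{x}$ (positions $n+1,\ldots,n+k$); only the prefix and the suffix depend on $x$, with the coordinates playing swapped roles. Thus $M$ accepts some such word iff there exist $x \in \Sigma^*$, states $p,q \in Q$, an accept state $f \in F$, and $m \geq 0$ with $q_0 \xrightarrow{\atopp{x}{c^{|x|}}} p$, $p \xrightarrow{\atopp{c}{c}^m} q$, and $q \xrightarrow{\atopp{c^{|x|}}{x}} f$.

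The algorithm has two ingredients. First, precompute the binary relation $C = \{(p,q) \in Q \times Q : p \xrightarrow{\atopp{c}{c}^{*}} q\}$ by transitive closure on the $|Q|$-vertex digraph of $\atopp{c}{c}$-transitions, in $O(|Q|^3)$ time. Second, compute the set of ``feasible endpoint pairs''
\[
T \;=\; \bigl\{(p,q) \in Q \times Q \,:\, \exists\, x \in \Sigma^{*},\ f \in F,\ q_0 \xrightarrow{\atopp{x}{c^{|x|}}} p \text{ and } q \xrightarrow{\atopp{c^{|x|}}{x}} f \bigr\}.
\]
The answer to \texttt{ACCEPTS-LONG-SHIFT} is then ``yes'' iff $T \cap C \neq \emptyset$.

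To compute $T$, I would run a forward fixed-point iteration on pairs. Initialize $T$ with $\{(q_0,f) : f \in F\}$, corresponding to the witness $x = \varepsilon$. Whenever $(p,q) \in T$, for each symbol $\sigma \in \Sigma$ and each state $q' \in Q$ with $\delta(q',\atopp{c}{\sigma}) = q$, add the pair $(\delta(p,\atopp{\sigma}{c}),\, q')$ to $T$; this corresponds to extending the witness from $x$ to $x\sigma$. Iterate until no new pair appears. Since $|T| \leq |Q|^2$, and since for each symbol $\sigma$ every state $q'$ is a $\atopp{c}{\sigma}$-preimage of exactly one $q$, the total work for the iteration is $O(|Q|^2 \cdot |\Sigma|)$. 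Combined with the transitive closure, the whole procedure runs in $O(|Q|^3)$ time.

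The main conceptual point, and the part that makes the result nonobvious, is the product-construction step: although the witness $x$ can be arbitrarily long, the only information about $x$ relevant to the prefix and suffix computations is the pair $(p,q)$ consisting of the state reached forward on $\atopp{x}{c^{|x|}}$ and the state from which $\atopp{c^{|x|}}{x}$ leads into $F$. The mirrored wrap-around structure of $x c^n \times c^n x$, together with the freedom to absorb any unneeded length into the middle $\atopp{c}{c}^m$ block, is precisely what allows this reduction to a reachability problem on the finite state space $Q \times Q$ --- and precisely what \texttt{ACCEPTS-SHIFT}, where $n$ may be smaller than $|x|$, does not provide.
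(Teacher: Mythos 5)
Your three-block decomposition and the overall plan (reduce to reachability in a product of copies of $M$, handling the middle $\atopp{c}{c}^{n-|x|}$ block by a transitive closure) are exactly the paper's approach, and the transitive-closure step is fine. However, the fixed-point computation of $T$ is incorrect, because a pair $(p,q)$ does not retain enough information to be extended consistently. Trace the update: starting from $(q_0,f)$ and applying the rule with symbols $\sigma_1,\ldots,\sigma_k$ in that order yields a pair whose first component certifies a path $q_0 \rightarrow p$ reading $\sigma_1\cdots\sigma_k \times c^k$ (you append $\atopp{\sigma}{c}$ at the \emph{end} of the prefix path), but whose second component certifies a path $q \rightarrow f$ reading $c^k \times \sigma_k\cdots\sigma_1$ (you prepend $\atopp{c}{\sigma}$ at the \emph{front} of the suffix path). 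The two witnesses grow in opposite directions, so your stated invariant already fails at $|x|=2$: what $T$ actually captures is the existence of an accepted word of the form $xc^n \times c^n x^R$, with $x^R$ the reversal of $x$ --- a different and in general inequivalent condition (take $M$ accepting only $ab\,cc \times cc\,ba$). The obstruction is intrinsic to the two-component formulation: to append $\atopp{c}{\sigma}$ at the end of the suffix path you would need to know where that path currently terminates inside $F$, and the pair $(p,q)$ does not record this.

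The repair is to add a third state component, which is what the paper does: build an NFA on $Q\times Q\times Q$ that guesses the start state $q$ of the suffix simulation, runs $M$ forward on $x\times c^{|x|}$ from $q_0$ in one component and forward on $c^{|x|}\times x$ from $q$ in another --- so both components consume the letters of $x$ left to right, in lockstep --- and accepts when the second simulation lands in $F$ and the first simulation's endpoint $p$ reaches $q$ by a path labelled in $\atopp{c}{c}^*$. Equivalently, you can keep your two-component product but rerun it once for each guessed start state $q$, giving $|Q|$ reachability searches on a graph of size $O(|Q|^2|\Sigma|)$. Either way the bound remains cubic, and your transitive-closure step for the middle block can be kept as is.
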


\begin{proof}
Suppose $x = a_1 a_2 \cdots a_m$.
If $y = x c^n \times c^n x$ and $n \geq |x|$ then
$$y = [a_1, c] \cdots [a_m, c] [c,c]^{n-m} [c, a_1] [c, a_2] 
\cdots [c, a_m] .$$
Given a DFA $M = (Q, \Sigma, \delta, q_0, F)$,
we can create a nondeterministic finite automaton $M'$ that accepts all $x$
for which the corresponding $y$ is accepted by $M$.
The idea is that $M'$ has state set $Q' = Q\times Q \times Q$; on input
$x$ the machine $M'$ ``guesses'' a state $q \in Q$, and stores
it in the second component, and then simulates $M$ on input
$x \times c^m$ in the first component, starting from $q_0$ and
reaching some state $p$,
and simulates $M$ on input
$c^m \times x$ in the third component, starting from $q$.
Finally, $M'$ accepts if the third component is an element of
$F$ and if there exists a path from $p$ to $q$ labeled
$[c,c]^i$ for some $i \geq 0$.  Now we can test whether
$M'$ accepts a word by using depth-first or breadth-first
search on the transition diagram of $M'$, whose size is at most
cubic in terms of the size of $M$. \qed
\end{proof}
\end{remark}

\section{Application to $k$-automatic sets of rational numbers}
\label{four}

Recently the second author and co-authors defined a notation
of $k$-automaticity for sets of non-negative rational numbers 
\cite{Schaeffer&Shallit:2012,Rowland&Shallit:2015},
in analogy with the more well-known concept for sets of
non-negative integers \cite{Cobham:1972}.

For an integer $k \geq 2$ define $\Sigma_k = \lbrace
0,1, \ldots, k-1 \rbrace$.    If $w \in \Sigma_k^*$, define
$[w]_k$ to be the integer represented by the word $w$ in
base $k$ (assuming the most significant digit is at the left).
Let $M$ be a finite automaton with input alphabet
$\Sigma_k \times \Sigma_k$.
We define $\quo_k (M) \subseteq \Que^{\geq 0}$ to be the set 
$$\left\lbrace {{[\pi_1 (x)]_k} \over {[\pi_2 (x)]_k}} \ : \
	x \in L(M) \right\rbrace.$$
Furthermore, we call a set $T \subseteq \Que^{\geq 0}$ $k$-automatic
if there exists a finite automaton $M$ 
such that $T = \quo_k (M)$.

We first consider the following decision problem:

\bigskip

\noindent {\tt ACCEPTS-POWER}

\medskip

\noindent{\it Instance:} an integer $k \geq 2$, and 
a finite automaton $M$ with input
alphabet $(\Sigma_k)^2$.

\medskip

\noindent{\it Question:}  Is $\quo_k (L(M)) \intersect \{ k^i \ : \ i \geq 0 \}$
nonempty?

\begin{theorem}
The problem {\tt ACCEPTS-POWER} is undecidable.
\end{theorem}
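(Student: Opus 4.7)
The plan is to reduce from \texttt{ACCEPTS-SHIFT}, which was shown undecidable in Theorem~\ref{asu}. The first step is to reformulate \texttt{ACCEPTS-POWER} in a shift-like form: for a word $y$ over $(\Sigma_k)^2$ with $[\pi_2(y)]_k \neq 0$, the ratio $[\pi_1(y)]_k / [\pi_2(y)]_k$ equals $k^i$ for some $i \geq 0$ iff $y = \atopp{u 0^i}{0^i u}$ for some $u \in \Sigma_k^*$ containing at least one nonzero digit and some $i \geq 0$. This holds because in base $k$ multiplying by $k^i$ simply appends $i$ zeros, and since $\pi_1(y)$ and $\pi_2(y)$ have equal length, any leading zeros on the $\pi_2$ side are compensated by trailing zeros on the $\pi_1$ side, and all remaining padding can be rolled into $u$.

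Given an instance $(\Gamma, c, M)$ of \texttt{ACCEPTS-SHIFT} (where without loss of generality $|\Gamma| \geq 1$), I set $k = |\Gamma|+1$ and identify $\Sigma_k$ with $\Gamma \cup \{c\}$ via $c \leftrightarrow 0$. The reformulation above looks almost like \texttt{ACCEPTS-SHIFT} itself, but with one crucial asymmetry: in \texttt{ACCEPTS-SHIFT} the non-separator word $x$ lies in $\Gamma^*$ and hence contains no $c$'s, whereas in \texttt{ACCEPTS-POWER} the word $u$ is allowed to contain the separator $0$. Handling this mismatch is the main obstacle, and I resolve it by a regular-language intersection. Namely, I construct $M'$ whose language is $L(M) \cap L_0$, where $L_0$ consists of all words $y$ over $(\Sigma_k)^2$ satisfying $\pi_1(y) \in (\Sigma_k \setminus \{0\})^+ 0^*$ and $\pi_2(y) \in 0^* (\Sigma_k \setminus \{0\})^+$. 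A short case analysis shows that for any $y = \atopp{u 0^i}{0^i u}$ with $u$ containing a nonzero digit, one has $y \in L_0$ exactly when $u$ is nonempty and has no $0$'s, since the two conditions together force all nonzero symbols of $u$ to be contiguous and positioned both as a suffix (from the $\pi_1$ side) and as a prefix (from the $\pi_2$ side).

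It then follows that $\quo_k(L(M')) \cap \{k^i : i \geq 0\}$ is nonempty iff $M$ accepts some word of the form $\atopp{x c^n}{c^n x}$ with $x \in \Gamma^+$ and $n \geq 0$. This restricted version of \texttt{ACCEPTS-SHIFT} remains undecidable, because the reduction used in the proof of Theorem~\ref{asu} produces only automata whose accepted shift-words have nonempty $x$ (the first component always begins $\atopp{d}{c}$, forcing $x$ to contain the marker $d$). Therefore \texttt{ACCEPTS-POWER} is undecidable.
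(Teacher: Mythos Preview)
Your proof is correct and follows essentially the same route as the paper: identify $\Gamma$ with the nonzero digits, $c$ with $0$, set $k=|\Gamma|+1$, and use the characterization that $\quo_k(y)$ is a power of $k$ precisely when $y$ has the shape $u0^i\times 0^iu$.

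The only real difference is how you handle the possibility that $u$ contains internal $0$'s. You deal with this by intersecting $L(M)$ with the regular filter $L_0$ and then appealing to a ``restricted'' \texttt{ACCEPTS-SHIFT}. The paper avoids this extra step entirely: it reduces directly from \texttt{REWRITE-POWER} and simply observes that the specific automaton $M$ built in the proof of Theorem~\ref{asu} already has $L(M)\subseteq \atopp{\Gamma^+c^+}{c^+\Gamma^+}$ (every accepted word has $\pi_1\in d\Gamma^*c^+$ and $\pi_2\in c^+\Gamma^*d$), so no word with a zero inside $x$ can arise. Your detour through $L_0$ is sound and slightly more general (it works for an arbitrary \texttt{ACCEPTS-SHIFT} instance with $x\neq\epsilon$), but it is not needed once one reuses the particular construction of Theorem~\ref{asu}.
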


\begin{proof}
The basic idea is to reduce once more from {\tt REWRITE-POWER},
using the same construction as in the proof of Theorem~\ref{asu}.
Our reduction
produces an instance of {\tt ACCEPTS-SHIFT} consisting
of an alphabet $\Gamma$ of cardinality $\ell$, a letter $c \not\in \Gamma$,
and a finite automaton $M$.
By renaming symbols, if necessary,
we can assume the symbols of $\Gamma$ are the digits $1,2,\ldots, \ell$
and $c$ is the digit $0$.  It then suffices to take $k = \ell + 1$.
Then $y \in L(M)$ with $\quo_k (y)$ a power of $k$ if and only
if $y = x 0^n \times 0^n x$ for some $x$ and some $n \geq 0$.
Note that, by our construction in the proof of Theorem~\ref{asu},
if $M$ accepts $x0^n \times 0^n x$, then $x$ contains no $0$'s. \qed
\end{proof}

Now consider a family of analogous decision problems
{\tt ACCEPTS-POWER}($k$), where in each problem $k$ is fixed.

\begin{theorem}
For each integer $k \geq 2$, the decision problem
{\tt ACCEPTS-POWER}$(k)$ is undecidable.
\end{theorem}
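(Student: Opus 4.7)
The plan is to reduce from \texttt{REWRITE-POWER} once more, encoding each letter of the alphabet by a fixed-length block of base-$k$ digits so that the construction of Theorem~\ref{asu} can be emulated in $(\Sigma_k)^2$ for any fixed $k \geq 2$. Starting from a \texttt{REWRITE-POWER} instance with alphabet $\Sigma$, I would first build the DFA $M$ over $(\Gamma \cup \{c\})^2$ recognizing the language $L$ of Theorem~\ref{asu}, where $\Gamma = \Sigma \cup \{d\}$.

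Next, I would pick an integer $r$ large enough that there is an injection $\mathrm{enc} \colon \Gamma \to (\Sigma_k \setminus \{0\}) \cdot \Sigma_k^{r-2} \cdot (\Sigma_k \setminus \{0\})$; that is, every non-$c$ letter is mapped to a length-$r$ base-$k$ block whose first and last digits are nonzero. Such an $r$ exists for every $k \geq 2$ because $(k-1)^2 k^{r-2}$ grows without bound. Set $\mathrm{enc}(c) = 0^r$. From $M$, construct an NFA $M'$ over $(\Sigma_k)^2$ by replacing each transition labeled by a pair $[p_1,p_2]$ with a chain of $r$ transitions reading the digit-pairs of $\mathrm{enc}(p_1) \times \mathrm{enc}(p_2)$ in parallel. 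Then $L(M')$ is precisely the set of letterwise encodings of words in $L(M)$.

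To verify the reduction, I would show that the \texttt{REWRITE-POWER} instance is positive iff $\quo_k(L(M'))$ contains a power of $k$. Forward: by Theorem~\ref{asu}, $M$ accepts some $x c^n \times c^n x$ with $n \geq 2$ and $x$ avoiding $c$; its encoding has first and second components $\mathrm{enc}(x)\,0^{rn}$ and $0^{rn}\,\mathrm{enc}(x)$, whose quotient is $k^{rn}$. Reverse: if $M'$ accepts $Y_1 \times Y_2$ with $[Y_1]_k = k^s [Y_2]_k$, then $Y_1 \times Y_2$ is the letterwise encoding of some $X_1 \times X_2 \in L(M)$, whose shape (from the definition of $L$) forces $Y_1$ to end with exactly $r(j+1)$ zeros coming from the trailing $c^{j+1}$, and $Y_2$ to begin with exactly $r(i+1)$ zeros coming from the leading $c^{i+1}$, since all other codewords start and end with nonzero digits. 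The equation $[Y_1]_k = k^s [Y_2]_k$ then pins down $s = r(i+1) = r(j+1)$ and matches the nonzero ``cores'' digit-for-digit; injectivity of $\mathrm{enc}$ recovers the identities from the $\Longleftarrow$ direction of Theorem~\ref{asu}'s proof, yielding $a^i \derivestar b^i$.

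The main obstacle, and the only place real care is needed, is this leading/trailing-zero bookkeeping: one must design the encoding so that nonzero endpoints of non-$c$ codewords isolate the zero blocks arising from $c^{i+1}$ and $c^{j+1}$, thereby ensuring that the base-$k$ quotient being a power of $k$ is equivalent to the word shape $x c^n \times c^n x$. Everything else mimics the prior proof mechanically.
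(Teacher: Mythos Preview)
Your proposal is correct and follows essentially the same approach as the paper: encode each non-$c$ letter by a fixed-length base-$k$ block whose first and last digits are nonzero, encode $c$ by an all-zero block of the same length, and then rerun the Theorem~\ref{asu} construction through this block encoding. The paper makes the encoding explicit over $\{0,1\}$ via the morphism $\varphi(c)=0^{m+1}$, $\varphi(a_i)=1^i 0^{m-i} 1$, while you assert the existence of a suitable injection with nonzero endpoints; both choices serve the same purpose of isolating the leading and trailing zero runs so that ``quotient is a power of $k$'' forces the $xc^n \times c^n x$ shape.
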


\begin{proof}
We have to overcome the problem that $k$ can depend on 
the size of $\Gamma$.  To do so, 
we recode all
words over the alphabet $\{0, 1\}$.  
It suffices to use the morphism
$\varphi$ defined by
\begin{align*}
\varphi(c) &    = 0^{m+1} \quad & \varphi(a_i) &= 1^i 0^{m-i} 1
\end{align*}
where $\Sigma = \{ a_1, a_2, \ldots, a_m \}$.  
In the proof of Theorem~\ref{asu}, we replace
$E, R, L$ by $E', R', L'$, as follows:
\begin{align*}
E' & = \left\{ \atopp{\varphi(e)}{\varphi(e)} \ : \ e \in \Sigma \right\} \\[1em]
R' & = \left\{ \atopp{\varphi(r)}{\varphi(\ell)} \ : \ 
	\ell \rightarrow r \in S \right\} \\[1em]
L' &= 
\atopp{\varphi(d)}{\varphi(c)}
{\atopp{\varphi(a)}{\varphi(c)}}^+
\atopp{\varphi(d)}{\varphi(d)}
\left( E^* R E^* 
\atopp{\varphi(d)}{\varphi(d)}
\right)^*
{\atopp{\varphi(c)}{\varphi(b)}}^+
\atopp{\varphi(c)}{\varphi(d)} .
\end{align*}
The construction works because the blocks for symbols of $\Sigma$
begin and end with at least one $1$, while the block for $c$ consists
of all $0$'s. Therefore, if the first coordinate of an element of
$L'$ has a suffix in $0^+$, this can only arise from $\varphi(c)$,
and the same for prefixes of the second coordinate. \qed
\end{proof}

\section{Problems about conjugates}
\label{five}

Recall that we say two words $x$ and $y$ are {\it conjugates} if
one is a cyclic shift of the other; that is, if there exist $u, v$
such that $x = uv$ and $y = vu$.

The undecidability result of the previous section suggests studying the
following related natural decision problem.

\medskip

\noindent{\tt ACCEPTS-GENERAL-SHIFT}

\medskip

\noindent{\it Instance:} a finite automaton $M$ with input alphabet $\Sigma^2$.

\medskip

\noindent{\it Question:} does $M$ accept a word of the form
$x \times y$ for some conjugates $x,y \in \Sigma^*$ ?

\begin{theorem}
The decision problem {\tt ACCEPTS-GENERAL-SHIFT} is undecidable.
\end{theorem}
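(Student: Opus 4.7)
The plan is to reduce from \texttt{ACCEPTS-SHIFT}, shown undecidable in Theorem~\ref{asu}. Given an instance with finite automaton $M$ over alphabet $(\Gamma \cup \{c\})^2$, I will introduce a fresh marker symbol $\$ \notin \Gamma \cup \{c\}$ and set $\Delta = \Gamma \cup \{c, \$\}$. The goal is to build a finite automaton $M'$ over $\Delta^2$ whose language is exactly
\[
L(M') = \bigl\{ (\$ x c^n) \times (c^n \$ x) : x \in \Gamma^*,\ n \geq 0,\ M \text{ accepts } (x c^n) \times (c^n x) \bigr\}.
\]
The crucial observation is that $\$ x c^n$ and $c^n \$ x$ are always conjugates---the latter is obtained from the former by cyclically moving the trailing block of $c$'s to the front. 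So every word in $L(M')$ is already a conjugate pair, and $M'$ will accept some conjugate pair if and only if $L(M')$ is nonempty, which in turn holds iff $M$ is a ``yes''-instance of \texttt{ACCEPTS-SHIFT}.

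To build $M'$, I will use a constant number of phase states layered on top of $M$'s state set. On the fly, $M'$ verifies the shape constraints $\pi_1 \in \$ \Gamma^* c^*$ and $\pi_2 \in c^* \$ \Gamma^*$ (with exactly one $\$$ in each coordinate) while simulating $M$ on a translated input. The translation discards position $0$ of $M'$, and at the unique position where the second coordinate reads $\$$, hands $M$ the symbol $c$ instead. A short case analysis on the relative positions of the $\$$ in $\pi_2$ and the $x / c^n$ boundary in $\pi_1$ will show that the translated sequence of pairs fed to $M$ is exactly $(x c^n) \times (c^n x)$, so the simulation succeeds iff $M$ accepts this shift.

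The remaining step is the converse: any $u \times v$ accepted by $M'$ with $u$ and $v$ conjugates must have the expected form. The format constraints give $u = \$ \alpha c^n$ and $v = c^{n'} \$ \beta$ with $\alpha, \beta \in \Gamma^*$. Since each word contains exactly one $\$$, the rotation carrying $u$ to $v$ is pinned down by the two $\$$ positions. A direct calculation shows that in order for this rotation to produce a prefix $c^{n'}$ one must have $n' \leq n$---otherwise some $\alpha_j$ would have to equal $c$, contradicting $c \notin \Gamma$---and then $n' = n$ and $\alpha = \beta$. The simulation clause then gives that $M$ accepts $(\alpha c^n) \times (c^n \alpha)$. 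I expect the hardest part of the plan to be this last rotation calculation, but once it is carried out the undecidability of \texttt{ACCEPTS-GENERAL-SHIFT} follows immediately.
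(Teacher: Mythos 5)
Your reduction is correct in substance, but it takes a genuinely different route from the paper's. The paper also reduces from \texttt{ACCEPTS-SHIFT}, but instead of adding a marker it (i) first decides separately whether $M$ accepts some $x\times x$ with $x\in\Gamma^*$ (the $n=0$ case, which is a regular-intersection test), and then (ii) intersects $L(M)$ with the regular set $\{sc^+\times c^+t : s,t\in\Gamma^*\}$; the case analysis showing that any conjugate pair $uv\times vu$ surviving this intersection must equal $zc^n\times c^nz$ uses the block of $c$'s itself as the anchor (since all $c$'s in each coordinate are consecutive, either $v\in c^+$ and $u\in\Gamma^+$, or the word degenerates to $c^n\times c^n$). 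Your fresh symbol $\$$ plays exactly the role that the $c$-block plays in the paper: because each coordinate contains exactly one $\$$, the rotation carrying $u$ to $v$ is uniquely determined, and your calculation forcing $n'=n$ and $\alpha=\beta$ is sound. Your version buys a uniform treatment of $n=0$ (no separate pre-check) and a cleaner ``pinned rotation'' argument, at the cost of enlarging the alphabet and of a slightly more involved automaton construction (the on-the-fly translation that drops position $0$ and rewrites the $[\cdot,\$]$ pair to $[\cdot,c]$, which does correctly reproduce $(xc^n)\times(c^nx)$ as input to $M$).

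One statement needs repair before this is airtight: the displayed language $L(M') = \{(\$xc^n)\times(c^n\$x) : \dots\}$ is \emph{not} regular in general (it encodes the same shift-matching that makes \texttt{ACCEPTS-SHIFT} undecidable; e.g., its intersection with $[\$,c][a,c]^*[a,\$][c,a]^*$ forces equal block lengths), so no finite automaton has exactly that language, and the claim ``every word in $L(M')$ is a conjugate pair, so accepting a conjugate pair is equivalent to nonemptiness'' cannot be used as stated. What your construction actually recognizes is the larger regular set of words $(\$\alpha c^n)\times(c^{n'}\$\beta)$ with $\alpha,\beta\in\Gamma^*$ whose translation is accepted by $M$, with no guarantee that $\alpha=\beta$ or $n=n'$. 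Fortunately your final paragraph already argues the converse for exactly this larger set, so the fix is only to define $L(M')$ by the shape constraints plus the translated simulation and to let the rotation calculation carry the equivalence; with that rewording the proof goes through.
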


\begin{proof}
We reduce from the problem {\tt ACCEPTS-SHIFT}.  An instance of this
problem is an alphabet $\Gamma$, a letter $c \notin \Gamma$, and
a finite automaton $M$ with input alphabet $(\Gamma \cup \{c\})^2$.

First check whether $M$ accepts a word of the form $x \times x$ for some $x \in \Gamma^*$.  (This is decidable because the language 
$\{x \times x \ : \ x \in \Gamma^*\}$ is regular.)
If so, {\tt ACCEPTS-SHIFT}$(\Gamma,c,M)$ = ``yes''.
Otherwise, construct a finite automaton $M'$ whose language is
\begin{align*}
L(M) \cap \{sc^+ \times c^+t \mid s,t \in \Gamma^*\}.
\end{align*}
Notice that
{\tt ACCEPTS-SHIFT}$(\Gamma,c,M')$ = {\tt ACCEPTS-SHIFT}$(\Gamma,c,M)$.
Clearly we have that if 
$$ \mbox{\tt ACCEPTS-GENERAL-SHIFT}(M') = \text{``no''}, $$
then {\tt ACCEPTS-SHIFT}$(\Gamma,c,M')$ = ``no''.  

So suppose that {\tt
ACCEPTS-GENERAL-SHIFT}$(M')$ = ``yes''.  Then $M'$ accepts a word $w = x
\times y$ for words $x = uv$, $y = vu$ where
$u,v \in (\Gamma \cup \{c\})^*$.  We now show
that $w = zc^n \times c^nz$ for some $z \in \Gamma^*$ and $n \geq 1$.

By the construction of $M'$, $uv$ ends with $c$, $vu$ begins with $c$,
and any two occurrences of $c$ in $uv$ or $vu$ have only $c$'s between
them.  Hence if $u$ or $v$ is empty, then $w = c^n \times c^n$ for some
$n \geq 1$, and we can take $z = \epsilon$, the empty word.  So say
neither $u$ nor $v$ is empty.  Then $v$ begins and ends with $c$, and hence
$v$ is in $c^+$.  It follows that if $u$ contains $c$, then $u$ begins
and ends with $c$, so again $w = c^n \times c^n$ for some $n \geq 1$,
and we can take $z = \epsilon$.  So say $u$ does not contain $c$.  Then
$w = uc^n \times c^n u$ with $u \in \Gamma^+$ and $n = |v|$, and we can
take $z = u$.

So $w = zc^n \times c^nz$ for some word $z \in \Gamma^*$ and $n \geq 1$. 
Therefore we have {\tt ACCEPTS-SHIFT}$(\Gamma,c,M')$ = ``yes''.  This completes
the reduction.  Then since {\tt ACCEPTS-SHIFT} is undecidable by
Theorem~\ref{asu}, {\tt ACCEPTS-GENERAL-SHIFT} is also undecidable.  \qed
\end{proof}

Now we turn to two other decision problems, both inspired by
the problem {\tt ACCEPTS-GENERAL-SHIFT}.  The first is

\newpage

\noindent{\tt ACCEPTS-DISTINCT-CONJUGATES}

\medskip

\noindent{\it Instance:}  A DFA $M = (Q, \Sigma, \delta, q_0, F)$.

\medskip

\noindent{\it Question:}  Does $M$ accept two distinct conjugates
$uv$ and $vu$?

\medskip

We will prove

\begin{theorem}
{\tt ACCEPTS-DISTINCT-CONJUGATES} is decidable.
\label{adc-thm}
\end{theorem}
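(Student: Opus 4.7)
The plan is to parameterize the search by the states reached during accepting runs on $u$, $v$, $uv$, and $vu$, reducing the question to a finite collection of subproblems about regular languages. For the DFA $M = (Q, \Sigma, \delta, q_0, F)$, let $L_{p,q}$ denote the regular language $\{w \in \Sigma^* : \delta^*(p,w) = q\}$, which is effectively computable from $M$. If $M$ accepts distinct conjugates $uv$ and $vu$, and we set $p = \delta^*(q_0, u)$, $r = \delta^*(q_0, v)$, $q_1 = \delta^*(p, v)$, and $q_2 = \delta^*(r, u)$, then $q_1, q_2 \in F$, and $u$ lies in $U := L_{q_0,p} \cap L_{r,q_2}$ while $v$ lies in $V := L_{q_0,r} \cap L_{p,q_1}$, with $uv \neq vu$. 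Conversely, any such data yields a pair of distinct accepted conjugates. So \texttt{ACCEPTS-DISTINCT-CONJUGATES} on $M$ holds iff there is some $(p,r,q_1,q_2) \in Q^4$ with $q_1, q_2 \in F$ such that $U$ contains some $u$ and $V$ contains some $v$ with $uv \neq vu$.

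The main step is to decide, for two effectively given regular languages $U$ and $V$, whether they contain words $u, v$ with $uv \neq vu$. The key combinatorial fact is the classical result that two non-empty words commute iff they are both powers of a common word. From this I would deduce that if $U$ and $V$ each contain at least one non-empty word, then every pair $(u, v) \in U \times V$ satisfies $uv = vu$ iff there is a primitive word $w$ with $U \cup V \subseteq w^*$; and such a $w$ is forced to be the primitive root of any non-empty element of $U \cup V$. The subroutine per quadruple is then: if $U$ or $V$ contains only the empty word, skip (since $u = \epsilon$ or $v = \epsilon$ forces $uv = vu$); otherwise pick any non-empty $u_0 \in U$, compute its primitive root $w_0$, and test the regular-language inclusions $U \subseteq w_0^*$ and $V \subseteq w_0^*$. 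The overall answer is ``yes'' iff some quadruple makes at least one of these inclusions fail, and ``no'' otherwise.

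The conceptually hard part is reducing the potentially infinite set of candidate pairs $(u,v)$ to a single primitive-root test; this rests on the commutation lemma for words. The remaining obstacles are essentially bookkeeping: verifying that the primitive root $w_0$ does not depend on which non-empty witness in $U \cup V$ one picks (so that the single test against $w_0^*$ is correct on both sides), and handling the degenerate cases in which $U$ or $V$ reduces to $\{\epsilon\}$ or is empty. Once these points are settled, the overall decision procedure enumerates the $O(|Q|^4)$ state quadruples and performs only standard operations on regular languages, establishing decidability.
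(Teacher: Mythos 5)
Your proposal is correct, and it takes a genuinely different route from the paper. The paper first proves a combinatorial lemma: if an $n$-state DFA accepts any pair of distinct conjugates, it accepts such a pair $uv$, $vu$ with $\min(|u|,|v|) \leq n^2$. That lemma is established by a minimality argument --- cutting loops out of acceptance paths in a product automaton and applying the Lyndon--Sch\"utzenberger theorem three times to derive a contradiction --- and the decision procedure then enumerates all words $u$ of length at most $n^2$, checking for each whether the regular language $\{ y : uy \in L,\ yu \in L,\ uy \neq yu\}$ is nonempty (regularity of the last constraint again coming from the commutation lemma, since $\{y : uy = yu\} = t^*$ for $t$ the primitive root of $u$). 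You bypass the length-bound lemma entirely: by fixing the four intermediate states $p = \delta^*(q_0,u)$, $r = \delta^*(q_0,v)$, $q_1 = \delta^*(p,v)$, $q_2 = \delta^*(r,u)$, you reduce the problem to $O(|Q|^4)$ instances of ``do regular languages $U$ and $V$ contain a non-commuting pair?'', and you settle each instance with a single application of the commutation lemma: all pairs commute iff both languages sit inside $w_0^*$ for $w_0$ the primitive root of any fixed nonempty witness. Your correctness argument for that subroutine is sound (in the failing case one of the pairs $(u_0,v_0)$, $(u_1,v_0)$, $(u_0,v_1)$ must fail to commute, since a nonempty word commuting with a nonempty power of a primitive $w_0$ must itself lie in $w_0^*$), and the degenerate cases are handled. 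What each approach buys: the paper's lemma is of independent interest (and they show the $n^2$ bound is tight up to a constant), but its enumeration of all words up to length $n^2$ is exponential in $n$; your state-quadruple decomposition needs no delicate pumping argument, uses Lyndon--Sch\"utzenberger only once per subproblem, and yields a polynomial-time algorithm, since a shortest nonempty word of $U$ has length at most $|Q|^2$ and all remaining steps are standard polynomial operations on regular languages.
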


To prove this theorem, we need the concept of primitive word and primitive
root.  A nonempty word $x$ is said to be {\it primitive}
if it cannot be written in the form $x = y^i$ for a word $y$ 
and an integer $i \geq 2$.  The {\it primitive root} of a word $x$
is the unique primitive word $t$ such that $x = t^j$ for some $j \geq 1$.

\begin{lemma} 
If a DFA $M$ of $n$ states accepts two distinct conjugates, then 
it accepts two distinct conjugates $uv$ and $vu$,
with at least one of $u$ and $v$ of length $\leq n^2$.
\label{conjlem}
\end{lemma}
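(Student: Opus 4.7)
The natural strategy is a minimality argument driven by pigeonhole in the product automaton $M \times M$. I would suppose for contradiction that the set $S := \{(u,v) : uv, vu \in L(M),\ uv \ne vu\}$ is nonempty but that every pair in $S$ has both $|u| > n^2$ and $|v| > n^2$. Pick $(u, v) \in S$ minimising $\min(|u|, |v|)$, with ties broken by $|u| + |v|$, and assume WLOG $|u| \le |v|$, so $|u| > n^2$. The goal is to produce a strictly smaller pair in $S$, contradicting minimality.

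The first step is to set up a pumping for $u$. Let $p_i := \delta^*(q_0, u[1..i])$ be the state of $M$ after reading the first $i$ letters of $u$ in the run on $uv$, and let $q_j := \delta^*(q_0, (vu)[1..j])$ be the state after $j$ letters in the run on $vu$. The pair $\phi(i) := (p_i, q_{|v|+i}) \in Q \times Q$ for $i = 0, \ldots, |u|$ is a walk in the product automaton $M \times M$ on the word $u$, because both components read the letter $u[i+1]$ at step $i+1$ (by step $|v|$ the second run has just completed its copy of $v$). Since $|u|+1 > n^2$, pigeonhole yields $0 \le i < j \le |u|$ with $\phi(i) = \phi(j)$. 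Setting $u' := u[1..i]\, u[j+1..|u|]$, a routine pumping-down argument shows $u'v, vu' \in L(M)$.

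Now the dichotomy: if $u'v \neq vu'$, then $(u', v) \in S$ with $|u'| < |u|$, contradicting the minimality of $\min(|u|, |v|)$. So I may assume the ``bad'' case $u'v = vu'$, which by the Lyndon--Sch\"utzenberger theorem forces $u'$ and $v$ to be powers of a common primitive word $t$; in particular $v \in t^*$ and $u' \in t^*$, while $u \notin t^*$ (the latter because $uv \neq vu$).

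The main obstacle is precisely this bad case. To close it, I would analyse \emph{every} collision of $\phi$ at once: by minimality, each collision $(i', j')$ must again produce a cut in $t^*$, so the gap $[i'+1, j']$ must contain every position $k$ at which $u[k]$ disagrees with $t[((k-1) \bmod |t|)+1]$. Letting $D$ denote this nonempty set of ``deviation positions,'' the constraint rules out any $\phi$-collision within $[0, \min D - 1]$, within $[\min D, \max D - 1]$, or within $[\max D, |u|]$, and also rules out cross-collisions touching the middle segment $[\min D, \max D - 1]$. A careful count of the distinct $\phi$-values in $Q \times Q$ then forces $|u| \le n^2$ (possibly after a symmetric pumping of $v$ and a Fine--Wilf comparison of the resulting periods to rule out the only borderline scenario, where $u$ and $v$ would end up with compatible periods and hence $u \in t^*$). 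This counting/periodicity step is the delicate core of the argument.
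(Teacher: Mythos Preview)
Your setup through the product automaton and the first pump of $u$ is fine, and reaching the ``bad case'' $u'v=vu'$ (hence $u',v\in t^*$ for a primitive $t$, while $u\notin t^*$) is exactly the crux. But the deviation-set argument you propose to close that case does not work as stated. First, $D$ can be empty even though $u\notin t^*$: take $u$ to be a proper prefix of $t^\omega$ whose length is not a multiple of $|t|$ (e.g.\ $t=ab$, $u=aba$). Second, even when $|u|$ is a multiple of $|t|$ (so $D\neq\emptyset$ and every cut length is a multiple of $|t|$, which is what makes ``the gap contains $D$'' valid after the index shift), your count yields only $|u|\le 2n^2-1$: injectivity of $\phi$ on $[0,\max D-1]$ and on $[\min D,|u|]$ bounds each segment by $n^2$, but collisions between $[0,\min D-1]$ and $[\max D,|u|]$ are entirely consistent with the constraint $i'\le\min D-1$ and $j'\ge\max D$, so the two bounds do not combine to $|u|+1\le n^2$. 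The Fine--Wilf escape you allude to is not spelled out and does not obviously repair either issue.

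The missing idea, which the paper uses, is to pump \emph{both} sides and then pump them \emph{simultaneously}. Minimise $|uv|$ rather than $\min(|u|,|v|)$; pump $v$ in the product automaton to obtain $v_1v_3$ with, by minimality, $u(v_1v_3)=(v_1v_3)u$, and symmetrically pump $u$ to $u_1u_3$ with $v(u_1u_3)=(u_1u_3)v$. Because the two loops sit in disjoint factors of both $uv$ and $vu$, they can be excised together, so $(u_1u_3)(v_1v_3)$ and $(v_1v_3)(u_1u_3)$ are also both accepted and hence, again by minimality, equal. Lyndon--Sch\"utzenberger applied to this third commutation gives a common primitive root $t$ for $u_1u_3$ and $v_1v_3$; feeding that back into the first two equations forces $u,v\in t^*$ as well, contradicting $uv\neq vu$. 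No periodicity bookkeeping is needed.
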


\begin{proof}
Let $L = L(M)$, the language accepted by $M = (Q, \Sigma, \delta, q_0, F)$,
where $|Q| = n$.
Suppose that there exist $uv \in L$, $vu \in L$, but $uv \not= uv$.
Without loss of generality, assume $|uv|$ is as small as possible.
Assume, contrary to what we want to prove, that both $|u|$ and $|v|$
are $> n^2$.    

Consider the acceptance path of $uv$ through $M$:  it looks like
$\delta(q_0, u) = q_1$ and $\delta(q_1, v) = p_1$ for some $q_1 \in Q$ and
$p_1 \in F$.   Similarly, consider the acceptance path of $vu$ through
$M$:  it looks like $\delta(q_0, v) = q_2$ and $\delta(q_2, u) = p_2$ for
some $q_2 \in Q$ and $p_2 \in F$.

Now create a new DFA 
$M' = (Q\times Q, \Sigma, \delta', q'_0, F')$
by the usual product construction,
where $\delta'([r,s],a) := [\delta(r,a), \delta(s,a)]$
and $q'_0 = [q_0, q_1]$ and $F = \{ [q_2, p_1] \}$.
Then $M'$ has $n^2$ states and accepts $v$. 

Since $|v| > n^2$,
the acceptance path for $v$ in $M'$ visits $\geq n^2 + 2$ states and
hence some state is repeated, giving us a loop of at most $n^2$ states
that can be cut out.
Hence we can write $v = v_1 v_2 v_3$, where $v_2 \not= \epsilon$ and
$v_1 v_3 \not= \epsilon$,
and $M'$ accepts $v_1 v_3$.  In $M$, then, it follows that
$\delta(q_1, v_1 v_3) = p_1$ and
$\delta(q_0, v_1 v_3) = q_2$, and hence $M$
accepts the conjugates $u v_1 v_3$ and $v_1 v_3 u$.  Since $|uv_1v_3|< |uv|$,
the minimality of $|uv|$ implies that these conjugates cannot be distinct,
and so we must have
\begin{equation}
uv_1 v_3 = v_1 v_3 u .
\label{u13}
\end{equation}

We can now repeat the argument of the previous paragraph for the 
word $u$.  We get a decomposition $u = u_1 u_2 u_3$
where $u_2 \not= \epsilon$ and $u_1 u_3 \not= \epsilon$, and we get 
\begin{equation}
v u_1 u_3 = u_1 u_3 v.
\label{v13}
\end{equation}

Finally, the acceptance paths in $M$ we have created imply that we
can cut out both $u_2$ and $v_2$ simultaneously from $uv$ and $vu$,
and still get words accepted by $M$.  So $u_1 u_3 v_1 v_3$ and
$v_1 v_3 u_1 u_3$ are both accepted.   Again, by minimality, we get
that
\begin{equation}
u_1 u_3 v_1 v_3 = v_1 v_3 u_1 u_3 .
\label{uv13}
\end{equation}

Now, by the Lyndon-Sch\"utzenberger
theorem (see, e.g., \cite{Lyndon&Schutzenberger:1962,Shallit:2009}),
Eq.~\eqref{uv13} implies the existence of
a nonempty word $t$ and integers
$i, j$ such that $u_1 u_3 = t^i$, $v_1 v_3 = t^j$.   Without loss of generality,
we can assume that $t$ is primitive.

Applying the same theorem to Eq.~\eqref{v13} tells us that there exists
$k$ such that $v = t^k$.  And applying the same theorem once more to
Eq.~\eqref{u13} tells us that there exists $\ell$ such that
$u = t^\ell$.  But then $uv = vu$, a contradiction.  \qed
\end{proof}

\begin{remark}
We observe that the bound of $n^2$ in the previous result is optimal,
up to a constant multiplicative factor.  Consider the languages
$$L_t = (a^t)^+ b (a^{t+1})^+ bb \ \cup \ 
(a^t)^+ bb (a^{t+1})^+ bb .$$
Then it is easy to see that $L_t$ can be accepted
by a (complete) DFA of $n = 3t+8$ states.  The shortest pair of distinct
conjugates in $L_n$, however, are
$a^{t(t+1)} b a^{t(t+1)} bb$ and
$a^{t(t+1)} bb a^{t(t+1)} b$, corresponding to
$u = a^{t(t+1)} b$ of length $t^2 + t + 1$ and
$v = a^{t(t+1)} bb$ of length $t^2 + t + 2$.   Thus both $u$ and $v$
are of length $n^2/9 + O(n)$.  
\end{remark}

We can now prove Theorem~\ref{adc-thm}.

\begin{proof}
Given $L = L(M)$,
for each nonempty word $x$ define the language
$$ L_x = \{ y \in\Sigma^* \ : \ xy \in L,\ yx \in L,\ xy \not= yx \}.$$

We observe that each $L_x$ is a regular language.  To see this,
note that we can write
$L_x = L_1 \, \cap \, L_2 \, \cap \, L_3$, where
\begin{align*}
L_1 &= \{ y \in \Sigma^* \ : \ xy \in L \} \\
L_2 &= \{ y  \in \Sigma^* \ : \ yx \in L \} \\
L_3 &= \{ y \in \Sigma^* \ : \ xy \not= yx \}  .
\end{align*}

Both $L_1$ and $L_2$ are easily seen to be regular, and finite
automata accepting
them are easily constructed from $M$.
To see that the same holds for $L_3$, note that if $xy = yx$ with
$x$ nonempty, then by the Lyndon-Sch\"utzenberger theorem it follows
that $y \in t^*$, where $t$ is the primitive root of $x$.
Hence $L_3 = \overline{t^*}$.  Therefore we can construct a finite automaton
$M_x$ accepting $L_x$.

Finally, here is the decision procedure.  
By Lemma~\ref{conjlem} we know that if an $n$-state DFA
$M$ accepts a pair of words $uv$ and $vu$ with $uv \not= vu$,
then it must accept a pair with either $|u| \leq n^2$ or
$|v| \leq n^2$.   Thus, it suffices to enumerate
all $u \in \Sigma^*$ of lengths $1, 2, \ldots, n^2$, and
compute $M_u$ for each $u$.  If at least one $M_u$ has $L(M_u)$ nonempty,
then answer ``yes''; otherwise answer ``no''.  \qed
\end{proof}

Our second decision problem is

\bigskip

\noindent{\tt ACCEPTS-NON-CONJUGATES}

\medskip

\noindent{\it Instance:}  A DFA $M = (Q, \Sigma, \delta, q_0, F)$.

\medskip

\noindent{\it Question:}  Does $M$ accept two words of the same
length that are not conjugates?

\bigskip

We prove

\begin{theorem}
{\tt ACCEPTS-NON-CONJUGATES} is decidable.
\end{theorem}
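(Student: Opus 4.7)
The plan is to decide the problem by a two-stage reduction: first dispatch the easy cases using Theorem~\ref{adc-thm} and direct product constructions, and then handle the remaining case by establishing a polynomial length bound on a shortest non-conjugate pair accepted by $M$, in the spirit of Lemma~\ref{conjlem}. The bound will then yield a straightforward enumeration algorithm.

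Stage one. Given $M$ with $n$ states, first test whether $M$ accepts two distinct equal-length words at all; this is immediate via the product DFA $M \times M$ intersected with the regular set of words over $\Sigma^2$ having at least one off-diagonal position. If not, answer ``no''. Otherwise, invoke Theorem~\ref{adc-thm} to decide whether $M$ accepts two distinct conjugates. If not, every pair of distinct equal-length accepted words is automatically a non-conjugate pair, so answer ``yes''. Only when both tests succeed does one need further work.

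Stage two. In the remaining case, the plan is to prove that if $M$ accepts a non-conjugate equal-length pair at all, then it does so with both words of length bounded by a polynomial in $n$. Form the product DFA $N = M \times M$ on alphabet $\Sigma^2$ (with $n^2$ states), whose language is exactly the set of pairs $w_1 \times w_2$ with $w_1, w_2 \in L(M)$ and $|w_1| = |w_2|$. Take a non-conjugate pair $(w_1, w_2) \in L(N)$ minimizing $|w_1|$ and suppose $|w_1|$ exceeds the target bound. A repeated state in the accepting run of $N$ yields a decomposition $w_1 = \alpha \beta_1 \gamma$ and $w_2 = \alpha' \beta_2 \gamma'$ with $|\alpha| = |\alpha'|$, $|\beta_1| = |\beta_2| \geq 1$, $|\gamma| = |\gamma'|$, such that pumping the loop down gives $(w_1', w_2') = (\alpha \gamma, \alpha' \gamma') \in L(N)$. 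By minimality, $w_1'$ and $w_2'$ must be either equal or distinct conjugates. In either subcase I would combine the pumping decomposition with the Lyndon-Sch\"utzenberger theorem (as exploited in Lemma~\ref{conjlem}) to extract periodicity on $w_1$ and $w_2$, ultimately forcing $w_1 \sim w_2$ and contradicting the hypothesis. With the length bound in hand, the final algorithm enumerates equal-length pairs up to the bound, tests $M$-acceptance directly, and tests conjugacy via the criterion that $w_2$ is a conjugate of $w_1$ iff $w_2$ occurs as a length-$|w_1|$ factor of $w_1 w_1$.

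The principal obstacle is completing the case analysis within stage two, especially the subcase where pumping produces distinct conjugates rather than a single word. Here Lemma~\ref{conjlem} cannot be applied as a black box, since the pumped-down pair sits on the ``conjugate'' side rather than the ``non-conjugate'' side. Instead, one must weave together the conjugacy equation $w_1' \sim w_2'$ with the pumping equations relating $w_1 \to w_1'$ and $w_2 \to w_2'$---possibly at several independent loops in the long accepting run of $N$---and extract enough periodicity through repeated applications of Lyndon-Sch\"utzenberger or Fine-Wilf to conclude $w_1 \sim w_2$. The delicate chained-equality bookkeeping is analogous to the derivation of Equations~\eqref{u13}--\eqref{uv13} in Lemma~\ref{conjlem}, but with one additional structural layer required to handle the interaction between the pump and the inherited conjugacy relation on $(w_1', w_2')$.
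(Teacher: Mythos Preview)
Your plan diverges substantially from the paper's argument, and the part you flag as the ``principal obstacle'' is a genuine gap rather than routine bookkeeping. In Lemma~\ref{conjlem} the pumping-plus-minimality argument succeeds because pumping down produces \emph{equalities} of the form $u v_1 v_3 = v_1 v_3 u$, and such commutation equations are exactly what the Lyndon--Sch\"utzenberger theorem converts into common periodicity. In your Stage~2, minimality only tells you that the pumped-down pair $(\alpha\gamma,\alpha'\gamma')$ is a pair of \emph{conjugates}: there exist $s,t$ with $\alpha\gamma = st$ and $\alpha'\gamma' = ts$. That is one equation in four unknowns with no control over where the cut between $s$ and $t$ falls relative to $\alpha,\gamma,\alpha',\gamma'$. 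Conjugacy is far weaker than commutation; it does not by itself force a common primitive root, and iterating over several independent loops only hands you several such unconstrained conjugacy relations at different lengths. There is no evident way to chain these into a periodicity statement about $w_1$ and $w_2$, and the analogy with Equations~\eqref{u13}--\eqref{uv13} breaks down precisely here. So as written the proposal does not establish the length bound, and it is not clear that the intended line of argument can be completed.

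The paper sidesteps all of this combinatorics. It observes that for a regular language $L$ one can effectively construct DFAs for $\lexleast(L)$ (the lexicographically least word of each length in $L$) and for $\cyc(L)$ (the cyclic closure), both being classical closure properties. Then $L$ contains two equal-length non-conjugates iff $L \not\subseteq \cyc(\lexleast(L))$: if some $x\in L$ is not a conjugate of the lex-least word $y$ of its length, then $(x,y)$ witnesses ``yes''; conversely any non-conjugate pair forces at least one of them outside $\cyc(\lexleast(L))$. Decidability then reduces to a single emptiness test on $L \setminus \cyc(\lexleast(L))$. This approach needs no length bound, no pumping, and no Fine--Wilf or Lyndon--Sch\"utzenberger machinery, and it makes your Stage~1 case split unnecessary as well.
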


\begin{proof}
Given a formal language $L$ over an ordered alphabet $\Sigma$,
we define $\lexleast(L)$ to be the union, over all
$n \geq 0$, of the lexicographically least word of length $n$ in $L$,
if it exists.
As is well-known (see, e.g., \cite[Lemma 1]{Shallit:1994}), if $L$ is
regular, then so is $\lexleast(L)$.  Furthermore, given a DFA
for $L$, we can algorithmically construct a DFA for $\lexleast(L)$.

We also define $\cyc(L)$ to be the union, over all words $w \in L$,
of the conjugates of $w$.  Again, as is well-known (see, e.g.,
\cite[Thm.~3.4.3]{Shallit:2009}), if $L$ is regular, then so is
$\cyc(L)$.  Furthermore, given a DFA for $L$, we can algorithmically
construct a DFA for $\cyc(L)$.

We claim that $L$ contains two words $x$ and $y$ of the same length
that are non-conjugates if and only if $L$ is not a subset of
$\cyc(\lexleast(L))$.  

Suppose such $x, y$ exist.  Let $t$ be the lexicographically least
word in $L$ of length $|x|$.  If $t$ is a conjugate of 
$x$, then $y$ is not a conjugate of $t$,
so $y \not\in \cyc(\lexleast(L))$.  On the other hand, if $t$
is not a conjugate of $x$, then $x \not\in \cyc(\lexleast(L))$.  In both
cases $L$ is not a subset of $\cyc(\lexleast(L))$.

Suppose $L$ is not a subset of $\cyc(\lexleast(L))$.  Then there is
some word of some length $n$ in $L$, say $x$, that is not a 
conjugate of the lexicographically least word of length $n$, say $y$.
Then $x$ and $y$ are the desired two words.

Putting this all together, we get our decision procedure for the
decision problem
{\tt ACCEPTS-NON-CONJUGATES}: given the DFA $M$ for $L$,
construct the DFA $M'$ for
$L - \cyc(\lexleast(L)) $ using the techniques mentioned above.
If $M'$ accepts at least one word, then the answer for
{\tt ACCEPTS-NON-CONJUGATES} is ``yes''; otherwise it is ``no''.  \qed
\end{proof}

\section{Final remarks}
\label{seven}

We still do not know whether the following problem
from \cite[p.~363]{Rowland&Shallit:2015}
is decidable:

\bigskip

\noindent{\tt ACCEPTS-INTEGER}

\medskip

\noindent{\it Instance:}  a finite automaton $M$ with input 
alphabet $(\Sigma_k)^2$.

\medskip

\noindent{\it Question:}  Is
$\quo_k (L(M)) \ \cap \ \Enn$ nonempty?

\bigskip

\noindent Unfortunately our techniques do not seem immediately applicable to this
problem.

\medskip

We mention two other problems about finite automata whose decidability
is still open:

\medskip

1.  Given a DFA $M$ with input alphabet $\{ 0, 1\}$, decide if there exists
at least one prime number $p$ such that $M$ accepts the base-$2$
representation of $p$.

\begin{remark}
An algorithm for this problem would allow
resolution of the existence of a Fermat prime $2^{2^k} + 1$ for
$k > 4$.
\end{remark}

\medskip

2. Given a DFA $M$ with input alphabet $\{0, 1\}$, decide if there
exists at least one integer $n\geq 0$ such that $M$ accepts the
base-$2$ representation of $n^2$.

\section{Acknowledgments}

We thank Hendrik Jan Hoogeboom for his helpful comments.


\begin{thebibliography}{10}
\providecommand{\url}[1]{\texttt{#1}}
\providecommand{\urlprefix}{URL }

\bibitem{Bar-Hillel&Perles&Shamir:1961}
Bar-Hillel, Y., Perles, M., Shamir, E.: On formal properties of simple phrase
  structure grammars. Z. Phonetik. Sprachwiss. Kommuniationsforsch.  14,
  143--172 (1961)

\bibitem{Book&Otto:1993}
Book, R.V., Otto, F.: String-Rewriting Systems. Springer-Verlag (1993)

\bibitem{Cobham:1972}
Cobham, A.: Uniform tag sequences. Math. Systems Theory  6,  164--192 (1972)

\bibitem{Engelfriet&Rozenberg:1980}
Engelfriet, J., Rozenberg, G.: Fixed point languages, equality languages, and
  representation of recursively enumerable languages. J. Assoc. Comput. Mach.
  27,  499--518 (1980)

\bibitem{Ginsburg&Rose:1963}
Ginsburg, S., Rose, G.F.: Some recursively unsolvable problems in {ALGOL}-like
  languages. J. Assoc. Comput. Mach.  10,  29--47 (1963)

\bibitem{Jan:2012}
Hoogeboom, H.J.: Are there undecidable properties of non-{Turing}-complete
  automata?, posting on stackexchange, October 20 2012. Available at
  \url{http://cs.stackexchange.com/questions/1697/are-there-undecidable-properties-of-non-turing-complete-automata}

\bibitem{Hopcroft&Ullman:1979}
Hopcroft, J.E., Ullman, J.D.: Introduction to Automata Theory, Languages, and
  Computation. Addison-Wesley (1979)

\bibitem{Lyndon&Schutzenberger:1962}
Lyndon, R.C., {Sch\"utzenberger}, M.P.: The equation {$a^M = b^N c^P$} in a
  free group. Michigan Math. J.  9,  289--298 (1962)

\bibitem{Post:1965}
Post, E.: Absolutely unsolvable problems and relatively undecidable
  propositions: account of an anticipation. In: Davis, M. (ed.) The
  Undecidable, pp. 338--433. Raven Press (1965)

\bibitem{Rowland&Shallit:2015}
Rowland, E., Shallit, J.: Automatic sets of rational numbers. Internat. J.
  Found. Comp. Sci.  26,  343--365 (2015)

\bibitem{Rozenberg&Salomaa:1994}
Rozenberg, G., Salomaa, A.: Cornerstones of Undecidability. Prentice-Hall
  (1994)

\bibitem{Schaeffer&Shallit:2012}
Schaeffer, L., Shallit, J.: The critical exponent is computable for automatic
  sequences. Internat. J. Found. Comp. Sci.  23,  1611--1626 (2012)

\bibitem{Shallit:2009}
Shallit, J.: A Second Course in Formal Languages and Automata Theory. Cambridge
  University Press (2009)

\bibitem{Shallit:1994}
Shallit, J.O.: Numeration systems, linear recurrences, and regular sets.
  Inform. Comput.  113,  331--347 (1994)

\bibitem{Turing:1936}
Turing, A.M.: On computable numbers, with an application to the
  {Entscheidungsproblem}. Proc. Lond. Math. Soc.  42,  230--265 (1936)

\end{thebibliography}
\end{document}